\documentclass{article}%
\usepackage{amssymb}
\usepackage{amsmath}
\usepackage{amsfonts}
\usepackage{graphicx}%
\setcounter{MaxMatrixCols}{30}
\providecommand{\U}[1]{\protect\rule{.1in}{.1in}}
\newtheorem{theorem}{Theorem}[section]

\newtheorem{conjecture}[theorem]{Conjecture}
\newtheorem{corollary}[theorem]{Corollary}

\newtheorem{proposition}[theorem]{Proposition}
\newtheorem{remark}[theorem]{Remark}

\newenvironment{proof}[1][Proof]{\noindent\textbf{#1.} }{\ \rule{0.5em}{0.5em}}
\setlength{\textwidth}{5.3in}
\begin{document}

\title{On the Structure of the Minimum Critical Independent Set of a Graph}
\author{Vadim E. Levit\\Ariel University Center of Samaria, Israel\\levitv@ariel.ac.il
\and Eugen Mandrescu\\Holon Institute of Technology, Israel\\eugen\_m@hit.ac.il}
\date{}
\maketitle

\begin{abstract}
Let $G=\left(  V,E\right)  $. A set $S\subseteq V$ is \textit{independent} if
no two vertices from $S$ are adjacent, and by $\mathrm{Ind}(G)$ we mean the
set of all independent sets of $G$. The number $d\left(  X\right)  =$
$\left\vert X\right\vert -\left\vert N(X)\right\vert $ is the
\textit{difference} of $X\subseteq V$, and $A\in\mathrm{Ind}(G)$ is
\textit{critical} if
\[
d(A)=\max\{d\left(  I\right)  :I\in\mathrm{Ind}(G)\}\text{ \cite{Zhang}}.
\]

Let us recall the following definitions:
\[
\mathrm{\ker}(G)=\cap\left\{  S:S\text{\textit{ is a critical independent
set}}\right\}  \text{ \cite{Levman2011a},}%
\]%
\[
\mathrm{core}\left(  G\right)  =\cap\left\{  S:S\text{\textit{ is a maximum
independent set}}\right\}  \text{ \cite{LevMan2002a}}.
\]

Recently, it was established that \ $\mathrm{\ker}(G)\subseteq\mathrm{core}%
(G)$ is true for every graph \cite{Levman2011a}, while the corresponding
equality holds for bipartite graphs \cite{Levman2011b}.

In this paper we present various structural properties of $\mathrm{\ker}(G)$.
The main finding claims that
\[
\mathrm{\ker}(G)=\cup\left\{  S_{0}:S_{0}\text{ is an inclusion minimal
independent set with }d\left(  S_{0}\right)  >0\right\}  .
\]
.

\textbf{Keywords:} independent set, critical set, ker, core, matching

\end{abstract}

\section{Introduction}

Throughout this paper $G=(V,E)$ is a simple (i.e., a finite, undirected,
loopless and without multiple edges) graph with vertex set $V=V(G)$ and edge
set $E=E(G)$. If $X\subseteq V$, then $G[X]$ is the subgraph of $G$ spanned by
$X$. By $G-W$ we mean either the subgraph $G[V-W]$, if $W\subseteq V(G)$, or
the partial subgraph $H=(V,E-W)$ of $G$, for $W\subseteq E(G)$. In either
case, we use $G-w$, whenever $W$ $=\{w\}$.

The \textit{neighborhood} of a vertex $v\in V$ is the set $N(v)=\{w:w\in V$
\ \textit{and} $vw\in E\}$, while the \textit{closed neighborhood} of $v\in V$
is $N[v]=N(v)\cup\{v\}$; in order to avoid ambiguity, we use also $N_{G}(v)$
instead of $N(v)$. The \textit{neighborhood} of $A\subseteq V$ is\emph{
}denoted by $N(A)=N_{G}(A)=\{v\in V:N(v)\cap A\neq\emptyset\}$, and
$N[A]=N(A)\cup A$.

A set $S\subseteq V(G)$ is \textit{independent} if no two vertices from $S$
are adjacent, and by $\mathrm{Ind}(G)$ we mean the set of all the independent
sets of $G$.

An independent set of maximum size will be referred to as a \textit{maximum
independent set} of $G$, and the \textit{independence number }of $G$ is
$\alpha(G)=\max\{\left\vert S\right\vert :S\in\mathrm{Ind}(G)\}$. Let
$\Omega(G)$ denote the family of all maximum independent sets, and
$\mathrm{core}(G)=\cap\{S:S\in\Omega(G)\}$ \cite{LevMan2002a}.

A \textit{matching} is a set of non-incident edges of $G$; a matching of
maximum cardinality is a \textit{maximum matching}, and its size is denoted by
$\mu(G)$.

The number $d(X)=\left\vert X\right\vert -\left\vert N(X)\right\vert $,
$X\subseteq V(G)$, is called the \textit{difference} of the set $X$. The
number $d_{c}(G)=\max\{d(X):X\subseteq V\}$ is called the \textit{critical
difference} of $G$, and a set $U\subseteq V(G)$ is \textit{critical} if
$d(U)=d_{c}(G)$ \cite{Zhang}. The number $id_{c}(G)=\max\{d(I):I\in
\mathrm{Ind}(G)\}$ is called the \textit{critical independence difference} of
$G$. If $A\subseteq V(G)$ is independent and $d(A)=id_{c}(G)$, then $A$ is
called \textit{critical independent }\cite{Zhang}. Clearly, $d_{c}(G)\geq
id_{c}(G)$ is true for every graph $G$.

\begin{theorem}
\label{Theorem3}\cite{Zhang} The equality $d_{c}(G)$ $=id_{c}(G)$ holds for
every graph $G$.
\end{theorem}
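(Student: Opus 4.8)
The plan is to establish the nontrivial inequality $d_c(G) \le id_c(G)$, since the reverse $d_c(G) \ge id_c(G)$ is already recorded in the excerpt (every independent set is in particular a subset of $V(G)$, so the maximum of $d$ taken over independent sets cannot exceed the maximum over all subsets). To this end I would start from an arbitrary critical set $X \subseteq V(G)$, i.e.\ one with $d(X) = d_c(G)$, and extract from it an \emph{independent} set of at least the same difference.

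The construction I propose is $Y = X \setminus N(X)$. First I would check that $Y$ is independent: if $u,v \in Y$ were adjacent, then $v \in N(u) \subseteq N(X)$, contradicting $v \in Y = X \setminus N(X)$. Hence $Y \in \mathrm{Ind}(G)$, so $d(Y) \le id_c(G)$ automatically.

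The core of the argument is to show $d(Y) \ge d(X)$. Since $Y \subseteq X$ we have $N(Y) \subseteq N(X)$, and $|Y| = |X| - |X \cap N(X)|$, so
\[
d(Y) - d(X) = |N(X)| - |N(Y)| - |X \cap N(X)| = |N(X) \setminus N(Y)| - |X \cap N(X)|.
\]
Thus it suffices to prove the inclusion $X \cap N(X) \subseteq N(X) \setminus N(Y)$, i.e.\ that no vertex of $X \cap N(X)$ lies in $N(Y)$. This is the one delicate point, and I expect it to be the main (if modest) obstacle, precisely because $N(X)$ may meet $X$ when $X$ is not independent. The key observation is that if some $w \in X \cap N(X)$ had a neighbor $y \in Y$, then $y$ would be adjacent to the vertex $w \in X$, forcing $y \in N(X)$ and contradicting $y \in Y = X \setminus N(X)$. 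Hence $X \cap N(X)$ is disjoint from $N(Y)$ while contained in $N(X)$, which gives the desired inclusion and therefore $d(Y) \ge d(X)$.

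Combining the pieces, $d(Y) \ge d(X) = d_c(G)$, while $Y$ being independent gives $d(Y) \le id_c(G)$; hence $d_c(G) \le id_c(G)$, and together with the trivial inequality we conclude $d_c(G) = id_c(G)$. The only quantities requiring genuine care are the neighborhood cardinalities, so I would verify that every set operation above correctly accounts for the overlap between $X$ and $N(X)$, which is exactly what the disjointness step resolves.
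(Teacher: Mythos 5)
Your proposal is correct, and there is nothing in the paper to compare it against step by step: the paper states this theorem purely as a citation to \cite{Zhang} and offers no proof of its own, so your argument stands or falls on its own merits. It stands. The verification: $Y=X\setminus N(X)$ is independent, since adjacent $u,v\in Y$ would give $v\in N(u)\subseteq N(X)$ (valid because $u\in X$); the count $\left\vert Y\right\vert =\left\vert X\right\vert -\left\vert X\cap N(X)\right\vert$ together with $N(Y)\subseteq N(X)$ yields $d(Y)-d(X)=\left\vert N(X)\setminus N(Y)\right\vert -\left\vert X\cap N(X)\right\vert$; and your disjointness step is sound: if $w\in X\cap N(X)$ had a neighbour $y\in Y$, then $y$, having the neighbour $w\in X$, would lie in $N(X)$, contradicting $y\in X\setminus N(X)$. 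Hence $X\cap N(X)\subseteq N(X)\setminus N(Y)$, so $d(Y)\geq d(X)=d_{c}(G)$, and since $Y\in\mathrm{Ind}(G)$, $id_{c}(G)\geq d_{c}(G)$; with the trivial reverse inequality (noted in the paper) the equality follows. Worth observing: your proof actually delivers more than the bare equality --- it shows constructively that every critical set $X$ contains a critical \emph{independent} set, namely $X\setminus N(X)$. This reduction is a known streamlined route to Zhang's theorem (it is the kind of reduction underlying the polynomial-time algorithms referenced in \cite{Ageev} and \cite{Zhang} and the matching-based perspective of \cite{Larson}), whereas Zhang's original derivation is embedded in a longer algorithmic development; your two-step extraction is the cleaner self-contained argument, and it meshes well with the paper's later material, e.g.\ it is consistent with the fact, quoted from the authors' earlier work, that $\mathrm{\ker}(G)$ equals the intersection of \emph{all} critical sets, independent or not.
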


For a graph $G$, let denote $\mathrm{\ker}(G)=\cap\left\{  S:S\text{
\textit{is a critical independent set}}\right\}  $. It is known that
\ $\mathrm{\ker}(G)\subseteq\mathrm{core}(G)$ is true for every graph
\cite{Levman2011a}, while the equality holds for bipartite graphs
\cite{Levman2011b}.

For instance, the graph $G$ from Figure \ref{fig51} has $X=\left\{
v_{1},v_{2},v_{3},v_{4}\right\}  $ as a critical set, since $N(X)=\{v_{3}%
,v_{4},v_{5}\}$ and $d(X)=1=d_{c}(G)$, while $I=\{v_{1},v_{2},v_{3}%
,v_{6},v_{7}\}$ is a critical independent set, because $d(I)=1=id_{c}(G)$;
other critical sets are $\{v_{1},v_{2}\}$, $\{v_{1},v_{2},v_{3}\}$,
$\{v_{1},v_{2},v_{3},v_{4},v_{6},v_{7}\}$. In addition, $\mathrm{\ker
}(G)=\{v_{1},v_{2}\}$, and $\mathrm{core}(G)$ is a critical set.
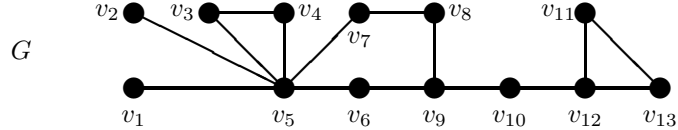
\begin{figure}[h]
\setlength{\unitlength}{1cm}\begin{picture}(5,1.9)\thicklines
\multiput(6,0.5)(1,0){6}{\circle*{0.29}}
\multiput(5,1.5)(1,0){4}{\circle*{0.29}}
\multiput(4,0.5)(0,1){2}{\circle*{0.29}}
\put(10,1.5){\circle*{0.29}}
\put(4,0.5){\line(1,0){7}}
\put(4,1.5){\line(2,-1){2}}
\put(5,1.5){\line(1,-1){1}}
\put(5,1.5){\line(1,0){1}}
\put(6,0.5){\line(0,1){1}}
\put(6,0.5){\line(1,1){1}}
\put(7,1.5){\line(1,0){1}}
\put(8,0.5){\line(0,1){1}}
\put(10,0.5){\line(0,1){1}}
\put(10,1.5){\line(1,-1){1}}
\put(4,0.1){\makebox(0,0){$v_{1}$}}
\put(3.65,1.5){\makebox(0,0){$v_{2}$}}
\put(4.65,1.5){\makebox(0,0){$v_{3}$}}
\put(6.35,1.5){\makebox(0,0){$v_{4}$}}
\put(6,0.1){\makebox(0,0){$v_{5}$}}
\put(7,0.1){\makebox(0,0){$v_{6}$}}
\put(7,1.15){\makebox(0,0){$v_{7}$}}
\put(8,0.1){\makebox(0,0){$v_{9}$}}
\put(8.35,1.5){\makebox(0,0){$v_{8}$}}
\put(9.65,1.5){\makebox(0,0){$v_{11}$}}
\put(9,0.1){\makebox(0,0){$v_{10}$}}
\put(10,0.1){\makebox(0,0){$v_{12}$}}
\put(11,0.1){\makebox(0,0){$v_{13}$}}
\put(2.5,1){\makebox(0,0){$G$}}
\end{picture}\caption{\textrm{core}$(G)=\{v_{1},v_{2},v_{6},v_{10}\}$.}%
\label{fig51}%
\end{figure}

It is easy to see that all pendant vertices are included in every maximum
critical independent set. It is known that the problem of finding a critical
independent set is polynomially solvable \cite{Ageev,Zhang}.

\begin{theorem}
\label{th7}For a graph $G=(V,E)$, the following assertions are true:

\emph{(i)} \cite{Levman2011a} the function $d$ is supermodular, i.e.,
\[
d(A\cup B)+d(A\cap B)\geq d(A)+d(B)\text{ for every }A,B\subseteq V;
\]

\emph{(ii)} \cite{Levman2011a} $G$ has a unique minimal critical independent
set, namely, $\mathrm{\ker}(G)$.

\emph{(iii)} \cite{Larson} there is a matching from $N(S)$ into $S$, for every
critical independent set $S$.
\end{theorem}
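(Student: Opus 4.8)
The plan is to handle the three parts in order, letting each feed the next. For part (i), I would reduce supermodularity of $d$ to submodularity of the neighborhood-size function. Since cardinality is modular, $\left\vert A\cup B\right\vert +\left\vert A\cap B\right\vert =\left\vert A\right\vert +\left\vert B\right\vert$, the desired $d(A\cup B)+d(A\cap B)\geq d(A)+d(B)$ is equivalent to $\left\vert N(A\cup B)\right\vert +\left\vert N(A\cap B)\right\vert \leq\left\vert N(A)\right\vert +\left\vert N(B)\right\vert$. The two set-level facts $N(A\cup B)=N(A)\cup N(B)$ (equality) and $N(A\cap B)\subseteq N(A)\cap N(B)$ (inclusion only) then close the argument, since the left-hand side is at most $\left\vert N(A)\cup N(B)\right\vert +\left\vert N(A)\cap N(B)\right\vert =\left\vert N(A)\right\vert +\left\vert N(B)\right\vert$ by inclusion-exclusion.

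For part (ii), I would first prove that the intersection of two critical independent sets is again critical independent, then iterate. Given critical independent $A,B$ with $d(A)=d(B)=id_{c}(G)$, part (i) gives $d(A\cup B)+d(A\cap B)\geq 2\,id_{c}(G)$. The essential input here is Theorem \ref{Theorem3}: it yields $d(X)\leq d_{c}(G)=id_{c}(G)$ for \emph{every} $X\subseteq V$, not merely independent $X$. This matters precisely because $A\cup B$ need not be independent, so without the equality $d_{c}=id_{c}$ one could not bound $d(A\cup B)$. With it, both $d(A\cup B)$ and $d(A\cap B)$ are at most $id_{c}(G)$, which forces $d(A\cap B)=id_{c}(G)$; and $A\cap B\subseteq A$ is independent, so $A\cap B$ is critical independent. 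Iterating over all critical independent sets shows $\mathrm{\ker}(G)$ itself is critical independent, and since it is contained in each of them it is the unique inclusion-minimal one.

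For part (iii), I would invoke Hall's theorem on the bipartite graph spanned by $S$ and $N(S)$: a matching saturating $N(S)$ into $S$ exists iff $\left\vert N(T)\cap S\right\vert \geq\left\vert T\right\vert$ for every $T\subseteq N(S)$. Suppose some $T$ violated this, so that $W:=N(T)\cap S$ has $\left\vert W\right\vert <\left\vert T\right\vert$, and set $S'=S\setminus W$. Each $t\in T$ has all of its $S$-neighbors inside $W$, hence loses every neighbor in $S'$; thus $T\cap N(S')=\emptyset$ and $N(S')\subseteq N(S)\setminus T$. Then $d(S')=\left\vert S'\right\vert -\left\vert N(S')\right\vert \geq(\left\vert S\right\vert -\left\vert W\right\vert )-(\left\vert N(S)\right\vert -\left\vert T\right\vert )=d(S)+\left\vert T\right\vert -\left\vert W\right\vert >d(S)=id_{c}(G)$, contradicting maximality of $id_{c}(G)$ over independent sets, as $S'\subseteq S$ is independent. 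Hence Hall's condition holds and the matching exists.

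I expect part (iii) to be the main obstacle: the decisive move is the construction $S'=S\setminus(N(T)\cap S)$ together with the observation that deleting the $S$-neighbors of $T$ detaches all of $T$ from the surviving independent set, so the drop in $\left\vert N(\cdot)\right\vert$ strictly dominates the drop in cardinality whenever Hall's condition fails. Parts (i) and (ii) are comparatively mechanical, the only subtlety being the recognition that Theorem \ref{Theorem3} is exactly what licenses bounding $d(A\cup B)$ despite $A\cup B$ possibly carrying an edge.
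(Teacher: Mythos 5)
Your proposal is correct, but there is no in-paper argument to compare it against: Theorem \ref{th7} is stated as imported background, all three parts being cited to the literature (parts \emph{(i)} and \emph{(ii)} to \cite{Levman2011a}, part \emph{(iii)} to \cite{Larson}), and the paper gives no proof of it. Your blind reconstruction supplies complete and sound proofs, and they are essentially the standard ones from those sources. In \emph{(i)}, the reduction via modularity of cardinality to $\left\vert N(A\cup B)\right\vert +\left\vert N(A\cap B)\right\vert \leq\left\vert N(A)\right\vert +\left\vert N(B)\right\vert$, using $N(A\cup B)=N(A)\cup N(B)$ together with $N(A\cap B)\subseteq N(A)\cap N(B)$, is exactly right. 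In \emph{(ii)}, you correctly isolate the one genuinely non-mechanical point: since $A\cup B$ need not be independent, one cannot bound $d(A\cup B)$ by $id_{c}(G)$ directly, and Zhang's equality $d_{c}(G)=id_{c}(G)$ (Theorem \ref{Theorem3}, also stated without proof in the paper) is precisely what licenses it; the pairwise-intersection step then iterates over the finite family, and the degenerate case $id_{c}(G)=0$ causes no trouble since $d(\emptyset)=0$. In \emph{(iii)}, the Hall-condition contradiction via $S'=S\setminus\left(  N(T)\cap S\right)$, with the key observation that every $t\in T$ has all its $S$-neighbors inside $N(T)\cap S$, hence $N(S')\subseteq N(S)\setminus T$ and $d(S')>d(S)=id_{c}(G)$ against criticality of $S$, is Larson's argument. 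In short: nothing to fix, and nothing in the paper's text that your proof could diverge from.
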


In this paper we characterize $\mathrm{\ker}(G)$. In addition, a number of
properties of $\mathrm{\ker}(G)$\ are presented as well.

\section{Results}

Deleting a vertex from a graph may decrease, leave unchanged or increase its
critical difference. For instance, $d_{c}\left(  G-v_{1}\right)  =d_{c}\left(
G\right)  -1$, $d_{c}\left(  G-v_{13}\right)  =d_{c}\left(  G\right)  $, while
$d_{c}\left(  G-v_{3}\right)  =d_{c}\left(  G\right)  +1$, where $G$ is
depicted in Figure \ref{fig51}.

\begin{proposition}
Let $G=(V,E)$ and $v\in V$. Then the following assertions hold:

\emph{(i) }$d_{c}\left(  G-v\right)  =d_{c}\left(  G\right)  -1$ if and only
if $v\in\mathrm{\ker}(G)$;

\emph{(ii)} if $v\in\mathrm{\ker}(G)$, then $\mathrm{\ker}(G-v)\subseteq
\mathrm{\ker}(G)-\left\{  v\right\}  $.
\end{proposition}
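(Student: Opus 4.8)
The plan is to base everything on the elementary deletion identity: for $X\subseteq V\setminus\{v\}$ one has $N_{G-v}(X)=N_{G}(X)\setminus\{v\}$, so, writing $d_{G-v}$ for the difference computed in $G-v$,
\[
d_{G-v}(X)=d(X)+\left[v\in N_{G}(X)\right],
\]
whence $d(X)\le d_{G-v}(X)\le d(X)+1$. These at once give the a priori bounds $d_{c}(G)-1\le d_{c}(G-v)\le d_{c}(G)+1$: the lower bound by applying the identity to $A\setminus\{v\}$ for a critical independent set $A$ of $G$, the upper bound by pushing a critical set of $G-v$ back into $G$. So (i) reduces to the single equivalence $d_{c}(G-v)\ge d_{c}(G)\iff v\notin\ker(G)$. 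The easy direction is immediate: if $v\notin\ker(G)$ then, since $\ker(G)$ is the intersection of all critical independent sets, some critical independent set $A$ omits $v$, and then $d_{c}(G-v)\ge d_{G-v}(A)\ge d(A)=d_{c}(G)$; contrapositively, $d_{c}(G-v)=d_{c}(G)-1$ forces $v\in\ker(G)$.

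The hard direction, $v\in\ker(G)\Rightarrow d_{c}(G-v)\le d_{c}(G)-1$, is where the work lies. Assume $v\in\ker(G)$ and, for contradiction, that some $X\subseteq V\setminus\{v\}$ has $d_{G-v}(X)\ge c:=d_{c}(G)$; by Theorem~\ref{Theorem3} applied to $G-v$ we may take $X$ independent. If $v\notin N_{G}(X)$ then $d(X)=d_{G-v}(X)\ge c$, so $X$ is a critical independent set of $G$ avoiding $v$, contradicting $v\in\ker(G)$; hence $v\in N_{G}(X)$, and the upper bound forces $d_{G-v}(X)=c$ and $d(X)=c-1$ (the value $d(X)=c$ again gives a forbidden critical independent set). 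Now apply supermodularity, Theorem~\ref{th7}(i), to $X$ and $K:=\ker(G)$: since $d(X\cup K)+d(X\cap K)\ge d(X)+d(K)=2c-1$ while each term is at most $c$, one of the two sets is critical. As $X\cap K\subseteq X$ avoids $v$, it cannot be critical, so $d(X\cup K)=c$. Thus $X\cup K$ is a critical set containing $v$ together with its neighbour in $X$, and in particular is \emph{not} independent.

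The crux is to manufacture from this non-independent critical set a critical independent set avoiding $v$. I would first record the auxiliary fact that a vertex of $\ker(G)$ has no private neighbour with respect to any critical independent set $S$: if $w$ were adjacent to $v$ but to no other vertex of $S$, deleting $v$ would drop $|N_{G}(S)|$ by at least one while dropping $|S|$ by one, giving $d(S\setminus\{v\})\ge c$ and hence a critical independent set without $v$. Using this, I would pass to the independent set $Y=X\cup(K\setminus N_{G}[X])$, which avoids $v$ because $v\in N_{G}(X)$, and compute $d(Y)=c-|P|+|Q|$, where $P=K\cap N_{G}(X)\ni v$ and $Q=N_{G}(X\cup K)\setminus N_{G}(Y)$. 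The a priori bound gives $|Q|\le|P|$, so the one genuine obstacle is the reverse inequality $|Q|\ge|P|$; I expect to obtain it by exhibiting an injection of $P$ into $Q$ drawn from the matching of Theorem~\ref{th7}(iii) for the critical independent set $K$, the private-neighbour fact ensuring that the partners of $P$ are forced outside $N_{G}(Y)$. Establishing $|Q|=|P|$ yields $d(Y)=c$, the desired contradiction, and finishes (i).

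For (ii), assume $v\in\ker(G)$, so by (i) $d_{c}(G-v)=c-1$. Let $S$ be an arbitrary critical independent set of $G$; then $v\in S$ since $v\in\ker(G)$, and
\[
d_{G-v}(S\setminus\{v\})=(|S|-1)-|N_{G}(S\setminus\{v\})|\ge(|S|-1)-|N_{G}(S)|=d(S)-1=c-1=d_{c}(G-v),
\]
so $S\setminus\{v\}$ is a critical independent set of $G-v$. By Theorem~\ref{th7}(ii) applied to $G-v$, its unique minimal critical independent set satisfies $\ker(G-v)\subseteq S\setminus\{v\}\subseteq S$. Intersecting over all critical independent sets $S$ of $G$ gives $\ker(G-v)\subseteq\ker(G)$, and since $v\notin\ker(G-v)$ we conclude $\ker(G-v)\subseteq\ker(G)\setminus\{v\}$. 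Thus the whole argument is bookkeeping with the deletion identity, supermodularity, and the minimality of $\ker$, save for the single inequality $|Q|\ge|P|$ in (i), which I expect to be the main obstacle.
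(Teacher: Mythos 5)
Your part (ii) and the easy direction of (i) are correct and essentially the paper's own arguments: the reduction via the deletion identity, the observation that $v\notin\mathrm{\ker}(G)$ yields a critical independent set surviving in $G-v$, and the use of Theorem \ref{th7}(ii) in $G-v$ all match the paper (the paper intersects only with $S=\mathrm{\ker}(G)$ itself, but that is immaterial). The genuine gap is in the hard direction of (i): your whole contradiction hinges on the inequality $\left\vert Q\right\vert \geq\left\vert P\right\vert$ for $Y=X\cup\left(K\setminus N_{G}[X]\right)$, and you never prove it --- you only say you ``expect'' to extract an injection of $P$ into $Q$ from the matching of Theorem \ref{th7}(iii). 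As it stands this is not a proof: that matching saturates $N(K)$, not $K$, so a vertex $p\in P\subseteq K$ need not even have a matching partner; and when it does have one, say $w$, nothing you have established keeps $w$ out of $N_{G}(Y)$ --- $w$ may be adjacent to a vertex of $X$ or of $K\setminus N_{G}[X]$, and your private-neighbour fact only guarantees $w$ a second neighbour in $K$, which may itself lie in $P$. So the crucial counting step is missing, not merely routine.

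The good news is that everything before that point is sound, and the gap closes without $Y$, $P$, $Q$ at all. You have manufactured a critical (non-independent) set $U=X\cup K$ with $d(U)=c$ in which $v$ has a neighbour. Now use the standard extraction behind Theorem \ref{Theorem3}: if $U$ is critical and $I$ is the set of vertices isolated in $G[U]$, then $N(I)\cap U=\emptyset$ while every vertex of $U\setminus I$ lies in $N(U)$, so $\left\vert N(I)\right\vert \leq\left\vert N(U)\right\vert -\left\vert U\setminus I\right\vert$ and hence $d(I)\geq d(U)=c$, i.e., $I$ is a critical independent set. Since $v$ is adjacent in $G[U]$ to a vertex of $X$, we get $v\notin I$, contradicting $v\in\mathrm{\ker}(G)\subseteq I$. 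It is worth noting that the paper takes a different route: it first shows every $u\in N(v)$ has at least two neighbours in $\mathrm{\ker}(G)$ (else $\mathrm{\ker}(G)\setminus\{v\}$ would already be critical, contradicting minimality), deduces $N\left(\mathrm{\ker}(G)\setminus\{v\}\right)=N\left(\mathrm{\ker}(G)\right)$ and thus $d_{c}(G-v)\geq c-1$, and then dismisses independent sets of $G-v$ of difference $c$; the paper's text is in fact terse on exactly the case you isolate, namely $v\in N_{G}(X)$ with $d_{G}(X)=c-1$, so your supermodularity analysis is a genuinely useful refinement --- it just needs the isolated-vertex extraction, not the unproven claim $\left\vert Q\right\vert \geq\left\vert P\right\vert$, to finish.
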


\begin{proof}
\emph{(i) }Let $v\in V$ and $H=G-v$.

If $v\notin\mathrm{\ker}(G)$, then $\mathrm{\ker}(G)\subseteq V\left(
G\right)  -\left\{  v\right\}  $. Hence
\[
d_{c}\left(  G-v\right)  \geq\left\vert \mathrm{\ker}(G)\right\vert
-\left\vert N_{H}\left(  \mathrm{\ker}(G)\right)  \right\vert \geq\left\vert
\mathrm{\ker}(G)\right\vert -\left\vert N_{G}\left(  \mathrm{\ker}(G)\right)
\right\vert =d_{c}\left(  G\right)  .
\]
Consequently, we infer that $d_{c}\left(  G-v\right)  <d_{c}\left(  G\right)
$ implies $v\in\mathrm{\ker}(G)$.

Conversely, assume that $v\in\mathrm{\ker}(G)$. Each $u\in N\left(  v\right)
$ satisfies $\left\vert N\left(  u\right)  \cap\mathrm{\ker}(G)\right\vert
\geq2$, because otherwise, $d\left(  \mathrm{\ker}(G)-\{v\}\right)  =d\left(
\mathrm{\ker}(G)\right)  $ and this contradicts the minimality of
$\mathrm{\ker}(G)$. Therefore, $N\left(  \mathrm{\ker}(G)-\{v\}\right)
=N\left(  \mathrm{\ker}(G)\right)  $ and hence
\begin{gather*}
d\left(  \mathrm{\ker}(G)-\{v\}\right)  =\left\vert \mathrm{\ker
}(G)-\{v\}\right\vert -\left\vert N\left(  \mathrm{\ker}(G)-\{v\}\right)
\right\vert =\\
=\left\vert \mathrm{\ker}(G)\right\vert -1-\left\vert N\left(  \mathrm{\ker
}(G)\right)  \right\vert =d_{c}\left(  G\right)  -1.
\end{gather*}
If there is some independent set $A$ in $G-v$, such that $d(A)=d_{c}\left(
G\right)  $, then $A$ is critical in $G$ and, hence we get the following
contradiction: $v\in\mathrm{\ker}(G)\subseteq A\subseteq V-\left\{  v\right\}
$. Therefore, $\mathrm{\ker}(G)-\{v\}$ is a critical independent set of $G-v$
and
\[
d_{c}\left(  G-v\right)  =d\left(  \mathrm{\ker}(G)-\{v\}\right)
=d_{c}\left(  G\right)  -1.
\]

\emph{(ii)} Assume that $\mathrm{\ker}(G-v)\neq\emptyset$. In part \emph{(i)},
we saw that $\mathrm{\ker}(G)-\{v\}$ is a critical independent set of $G-v$.
Hence, we get that $\mathrm{\ker}(G-v)\subseteq\mathrm{\ker}(G)-\left\{
v\right\}  $.
\end{proof}

\begin{remark}
Actually, $\mathrm{\ker}(G-v)$ may be different from $\mathrm{\ker
}(G)-\left\{  v\right\}  $; for instance, if $K_{3,2}=(A,B,E)$, $\left\vert
A\right\vert =3$, then $\mathrm{\ker}(K_{3,2})=A$ and $\mathrm{\ker}%
(K_{3,2}-v)=\emptyset\neq\mathrm{\ker}(K_{3,2})-\left\{  v\right\}  $, for
every $v\in A$. It is also possible $\mathrm{\ker}(G)-\left\{  v\right\}
=\emptyset$, while $\mathrm{\ker}(G-v)\neq\emptyset$; e.g., $G=C_{4}$.
\end{remark}

By Theorem \ref{th7}\emph{(iii)}, there is a matching from $N\left(  S\right)
$ into $S=\left\{  v_{1},v_{2},v_{3}\right\}  $, for instance, $M=\left\{
v_{2}v_{5},v_{3}v_{4}\right\}  $, since $S$ is critical independent for the
graph $G$ from Figure \ref{fig51}. On the other hand, there is no matching
from $N\left(  S\right)  $ into $S-v_{3}$. The case of the critical
independence set $\mathrm{\ker}(G)$ is more specific.

\begin{theorem}
\label{th9}Let $A$ be a critical independent set in a graph $G$. Then the
following statements are equivalent:

\emph{(i) }$A=\mathrm{\ker}(G)$;

\emph{(ii)} there is no set $B\subseteq$ $N\left(  A\right)  ,B\neq\emptyset$
such that $\left\vert N\left(  B\right)  \cap A\right\vert =\left\vert
B\right\vert $;

\emph{(iii) }for each $v\in A$ there exists a matching from $N\left(
A\right)  $ into $A-v$.
\end{theorem}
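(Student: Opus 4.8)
The plan is to read all three statements through Hall's marriage theorem applied to the bipartite graph $H$ whose parts are $A$ and $N(A)$ and whose edges are the edges of $G$ joining them (there are none inside $A$, since $A$ is independent). Because $A$ is critical independent, Theorem~\ref{th7}\emph{(iii)} furnishes a matching of $H$ saturating $N(A)$; by Hall's theorem this is equivalent to the inequality $\left\vert N(B)\cap A\right\vert \ge\left\vert B\right\vert$ for every $B\subseteq N(A)$, and this single inequality will drive every implication below.

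I would first settle (ii)$\Leftrightarrow$(iii). By Hall's theorem applied to $H-v$, a matching from $N(A)$ into $A-v$ exists iff $\left\vert N(B)\cap(A-v)\right\vert \ge\left\vert B\right\vert$ for all $B\subseteq N(A)$. When $v\notin N(B)$ this is the inequality already in hand; when $v\in N(B)$ it becomes $\left\vert N(B)\cap A\right\vert \ge\left\vert B\right\vert +1$, a strict inequality. Ranging over all $v\in A$ shows that (iii) is equivalent to $\left\vert N(B)\cap A\right\vert >\left\vert B\right\vert$ for every nonempty $B\subseteq N(A)$; given the Hall inequality, this says exactly that no nonempty $B$ achieves equality, which is (ii).

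The heart of the proof is (i)$\Leftrightarrow$(ii), which I would handle by two complementary set surgeries. For (i)$\Rightarrow$(ii), argued contrapositively, a nonempty $B\subseteq N(A)$ with $\left\vert N(B)\cap A\right\vert =\left\vert B\right\vert$ lets me delete $D=N(B)\cap A$ and form $A'=A\setminus D$; the point is that every vertex of $B$ loses all its $A$-neighbours, so $N(A')\subseteq N(A)\setminus B$ and $\left\vert N(A')\right\vert \le\left\vert N(A)\right\vert -\left\vert B\right\vert$. The arithmetic then gives $d(A')\ge d(A)=d_{c}(G)$, so $A'$ is a strictly smaller critical independent set, contradicting the minimality of $\mathrm{\ker}(G)$ from Theorem~\ref{th7}\emph{(ii)}. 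For (ii)$\Rightarrow$(i), I suppose $A\ne\mathrm{\ker}(G)$, so $K=\mathrm{\ker}(G)\subsetneq A$ with $d(K)=d(A)=d_{c}(G)$, and I take $B=N(A)\setminus N(K)$; equality of the critical differences forces $\left\vert B\right\vert =\left\vert A\right\vert -\left\vert K\right\vert \ge1$, while each vertex of $B$ is nonadjacent to $K$ and hence has all its $A$-neighbours in $A\setminus K$, giving $\left\vert N(B)\cap A\right\vert \le\left\vert A\setminus K\right\vert =\left\vert B\right\vert$. With the Hall inequality this produces a nonempty $B$ with $\left\vert N(B)\cap A\right\vert =\left\vert B\right\vert$, contradicting (ii).

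I expect the main obstacle to be choosing the correct witness set in the (i)$\Leftrightarrow$(ii) steps: the natural temptation is to manipulate $B$ directly, whereas the argument needs the deletion of $N(B)\cap A$ (respectively the gap $N(A)\setminus N(K)$), so that the equality of critical differences converts the generic Hall inequality into the exact equality $\left\vert N(B)\cap A\right\vert =\left\vert B\right\vert$. Once that bookkeeping is identified, the matching reformulation (ii)$\Leftrightarrow$(iii) is comparatively routine.
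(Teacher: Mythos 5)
Your proposal is correct and takes essentially the same route as the paper: (ii)$\Leftrightarrow$(iii) via Hall's theorem, (i)$\Rightarrow$(ii) by deleting $N(B)\cap A$ to obtain a critical independent set properly contained in $\mathrm{\ker}(G)$, contradicting its minimality from Theorem~\ref{th7}\emph{(ii)}, and (ii)$\Rightarrow$(i) with the very same witness $B=N(A)\setminus N(\mathrm{\ker}(G))$ whose size is forced by the equality of critical differences. The only cosmetic difference is that you derive $\left\vert N(B)\cap A\right\vert =\left\vert B\right\vert$ from the containment $N(B)\cap A\subseteq A\setminus\mathrm{\ker}(G)$ combined with Hall's lower bound, where the paper reads the same equality off the matching guaranteed by Theorem~\ref{th7}\emph{(iii)}.
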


\begin{proof}
\emph{(i) }$\Longrightarrow$ \emph{(ii) }By Theorem \ref{th7}\emph{(iii)},
there is a matching, say $M$, from $N\left(  \mathrm{\ker}(G)\right)  $ into
$\mathrm{\ker}(G)$. Suppose, to the contrary, that there is some non-empty
set$\ B\subseteq$ $N\left(  \mathrm{\ker}(G)\right)  $ such that
\[
\left\vert M\left(  B\right)  \right\vert =\left\vert N\left(  B\right)
\cap\mathrm{\ker}(G)\right\vert =\left\vert B\right\vert .
\]
It contradicts the fact that, by Theorem \ref{th7}\emph{(ii)}, $\mathrm{\ker
}(G)$ is a minimal critical independent set, because
\[
d\left(  \mathrm{\ker}(G)-N\left(  B\right)  \right)  =d\left(  \mathrm{\ker
}(G)\right)  \text{, while }\mathrm{\ker}(G)-N\left(  B\right)  \subsetneqq
\mathrm{\ker}(G)\text{.}%
\]

\emph{(ii) }$\Longrightarrow$ \emph{(i) }Suppose $A-\mathrm{\ker}%
(G)\neq\emptyset$. By Theorem \ref{th7}\emph{(iii)}, there is a matching, say
$M$, from $N\left(  A\right)  $ into $A$. Since there are no edges connecting
vertices belonging to $\mathrm{\ker}(G)$ with vertices from $N\left(
A\right)  -N\left(  \mathrm{\ker}(G)\right)  $, we obtain that $M\left(
N\left(  A\right)  -N\left(  \mathrm{\ker}(G)\right)  \right)  \subseteq
A-\mathrm{\ker}(G)$. Moreover, we have that $\left\vert N\left(  A\right)
-N\left(  \mathrm{\ker}(G)\right)  \right\vert =\left\vert A-\mathrm{\ker
}(G)\right\vert $, otherwise
\begin{align*}
\left\vert A\right\vert -\left\vert N\left(  A\right)  \right\vert  &
=\left(  \left\vert \mathrm{\ker}(G)\right\vert -\left\vert N\left(
\mathrm{\ker}(G)\right)  \right\vert \right)  +\left(  \left\vert
A-\mathrm{\ker}(G)\right\vert -\left\vert N\left(  A\right)  -N\left(
\mathrm{\ker}(G)\right)  \right\vert \right)  >\\
&  >\left(  \left\vert \mathrm{\ker}(G)\right\vert -\left\vert N\left(
\mathrm{\ker}(G)\right)  \right\vert \right)  =d_{c}\left(  G\right)  .
\end{align*}

It means that the set $N\left(  A\right)  -N\left(  \mathrm{\ker}(G)\right)  $
contradicts the hypothesis of \emph{(ii)}, because
\[
\left\vert N\left(  A\right)  -N\left(  \mathrm{\ker}(G)\right)  \right\vert
=\left\vert A-\mathrm{\ker}(G)\right\vert =\left\vert N\left(  N\left(
A\right)  -N\left(  \mathrm{\ker}(G)\right)  \right)  \cap A\right\vert
\text{.}%
\]
Consequently, the assertion is true.

\emph{(ii) }$\Longrightarrow$ \emph{(iii)} By Theorem \ref{th7}\emph{(iii)},
there is a matching, say $M$, from $N\left(  A\right)  $ into $A$. Suppose, to
the contrary, that there is no matching from $N\left(  A\right)  $ into $A-v$.
Hence, by Hall's Theorem, it implies the existence of a set $B\subseteq$
$N\left(  A\right)  $ such that $\left\vert N\left(  B\right)  \cap
A\right\vert =\left\vert B\right\vert $, which contradicts the hypothesis of
\emph{(ii)}.

\emph{(iii) }$\Longrightarrow$ \emph{(ii)} Assume, to the contrary, that there
is a non-empty subset $B$ of $N\left(  A\right)  $ such that $\left\vert
N\left(  B\right)  \cap A\right\vert =\left\vert B\right\vert $. Let $v$ $\in
N\left(  B\right)  \cap A$. Hence, we obtain that
\[
\left\vert N\left(  B\right)  \cap A-v\right\vert <\left\vert B\right\vert .
\]
Then, by Hall's Theorem, it is impossible to find a matching from $N\left(
A\right)  $ into $A-v$, in contradiction with the hypothesis of \emph{(iii)}.
\end{proof}

Since $\mathrm{\ker}(G)$ is a critical set, Theorem \ref{th7}\emph{(iii)}
assures that there is a matching from $N\left(  \mathrm{\ker}(G)\right)  $
into $\mathrm{\ker}(G)$. The following result shows that there are at least
two such matchings.

\begin{corollary}
For a graph $G$ the following are true:

\emph{(i)} every edge $e\in\left(  \mathrm{\ker}(G),N\left(  \mathrm{\ker
}(G)\right)  \right)  $ belongs to a matching from $N\left(  \mathrm{\ker
}(G)\right)  $ into $\mathrm{\ker}(G)$;

\emph{(ii)} every edge $e\in\left(  \mathrm{\ker}(G),N\left(  \mathrm{\ker
}(G)\right)  \right)  $ is not included in one matching from $N\left(
\mathrm{\ker}(G)\right)  $ into $\mathrm{\ker}(G)$ at least.
\end{corollary}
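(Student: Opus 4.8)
The plan is to reduce both parts to a single \emph{strict} Hall-type inequality on the bipartite graph between $\mathrm{\ker}(G)$ and $N\left(\mathrm{\ker}(G)\right)$. Write $K=\mathrm{\ker}(G)$ and $N=N\left(K\right)$ for brevity; if $K=\emptyset$ there are no edges to discuss and both claims hold vacuously, so assume $K\neq\emptyset$. By Theorem \ref{th7}\emph{(iii)} there is a matching from $N$ into $K$, so Hall's condition yields $\left\vert N\left(B\right)\cap K\right\vert \geq\left\vert B\right\vert$ for every $B\subseteq N$. On the other hand, Theorem \ref{th9}\emph{(ii)} (applied with $A=K$) forbids equality for any non-empty $B$. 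Combining the two, the first thing I would record is the key inequality
\[
\left\vert N\left(B\right)\cap K\right\vert \geq\left\vert B\right\vert +1\quad\text{for every non-empty }B\subseteq N .
\]

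For part \emph{(i)}, given an edge $uv$ with $u\in K$ and $v\in N$, the idea is to commit $uv$ to the matching and then saturate $N-v$ into $K-u$. By Hall's Theorem this residual matching exists provided $\left\vert\left(N\left(B\right)\cap K\right)\setminus\{u\}\right\vert\geq\left\vert B\right\vert$ for every $B\subseteq N-v$; since deleting the single vertex $u$ lowers the left-hand side by at most one, the strict inequality above supplies exactly the slack needed, Hall's condition survives, and gluing in the edge $uv$ produces a matching from $N$ into $K$ through $uv$.

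For part \emph{(ii)}, taking the same edge $uv$, I would instead delete $uv$ and look for a matching from $N$ into $K$ in the remaining bipartite graph. Deleting $uv$ can only shrink $N\left(B\right)\cap K$ when $v\in B$, and even then it removes at most the vertex $u$; for such (necessarily non-empty) sets $B$ the strict inequality again leaves $\left\vert N\left(B\right)\cap K\right\vert -1\geq\left\vert B\right\vert$, while for every $B$ with $v\notin B$ the neighbourhood inside $K$ is unchanged. Hence Hall's condition persists after the deletion of $uv$, yielding a matching from $N$ into $K$ that omits the edge $uv$.

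The substantive step is the first one: extracting the strict surplus $\left\vert N\left(B\right)\cap K\right\vert >\left\vert B\right\vert$ from the minimality of $\mathrm{\ker}(G)$ guaranteed by Theorem \ref{th7}\emph{(ii)}, as encoded in Theorem \ref{th9}\emph{(ii)}. Once this surplus is available, both conclusions become routine deletion-and-Hall arguments; the only point to monitor is that removing one vertex (in part \emph{(i)}) or one edge (in part \emph{(ii)}) costs at most one unit of surplus, which the bound $\geq\left\vert B\right\vert+1$ is precisely designed to absorb.
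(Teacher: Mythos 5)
Your proof is correct, but it takes a different route from the paper's. The paper applies Theorem \ref{th9}\emph{(iii)} with $v=x$ (the endpoint of $e=xy$ in $\mathrm{\ker}(G)$) to get a single matching $M$ from $N\left(\mathrm{\ker}(G)\right)$ into $\mathrm{\ker}(G)-x$; since $x$ is unsaturated, $M$ itself avoids $e$, proving \emph{(ii)}, and since $y$ is matched to some $z\neq x$, the one-edge swap $\left(M-\left\{yz\right\}\right)\cup\left\{xy\right\}$ is a matching through $e$, proving \emph{(i)} --- both parts from one matching and one exchange, with no further appeal to Hall's theorem. You instead distill from Theorem \ref{th9}\emph{(ii)} (legitimately applied, since $\mathrm{\ker}(G)$ is a critical independent set by Theorem \ref{th7}\emph{(ii)} and satisfies \ref{th9}\emph{(i)} trivially) the strict surplus $\left\vert N\left(B\right)\cap K\right\vert\geq\left\vert B\right\vert+1$ for all non-empty $B\subseteq N\left(K\right)$, and then run two separate Hall verifications, one after deleting the vertex $u$ (and $v$) and one after deleting the edge $uv$; I checked both, and the bookkeeping is right, including the observation that edge deletion shrinks $N\left(B\right)\cap K$ only when $v\in B$ and then by at most the vertex $u$. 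What your version buys is a reusable quantitative statement --- the unit of slack in Hall's condition at $\mathrm{\ker}(G)$ --- which makes robustness under single vertex or edge deletions transparent and would generalize to deleting $d$ elements given surplus $d$; what it costs is two invocations of Hall's theorem where the paper needs none beyond what is already packaged in Theorem \ref{th9}\emph{(iii)}. It is worth noting that the paper proves the equivalence \emph{(ii)}$\Longleftrightarrow$\emph{(iii)} of Theorem \ref{th9} via Hall's theorem anyway, so your argument essentially relocalizes that equivalence rather than bypassing it; the paper's exchange argument is the shorter path given that \emph{(iii)} is already available.
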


\begin{proof}
Let $e=xy\in\left(  \mathrm{\ker}(G),N\left(  \mathrm{\ker}(G)\right)
\right)  $, such that $x\in\mathrm{\ker}(G)$. By Theorem \ref{th9}\emph{(iii)}
there is a matching $M$ from $N\left(  \mathrm{\ker}(G)\right)  $ into
$\mathrm{\ker}(G)-x$, that matches $y$ with some $z\in\mathrm{\ker}(G)-x$.
Clearly, $M$ is a matching from $N\left(  \mathrm{\ker}(G)\right)  $ into
$\mathrm{\ker}(G)$ that does not contain the edge $e=xy$, while $\left(
M-\left\{  yz\right\}  \right)  \cup\left\{  xy\right\}  $ is a matching from
$N\left(  \mathrm{\ker}(G)\right)  $ into $\mathrm{\ker}(G)$, which includes
the edge $e=xy$.
\end{proof}

\begin{figure}[h]
\setlength{\unitlength}{1.0cm} \begin{picture}(5,1.9)\thicklines
\multiput(1,0.5)(1,0){3}{\circle*{0.29}}
\multiput(2,1.5)(1,0){2}{\circle*{0.29}}
\put(1,0.5){\line(1,0){2}}
\put(2,0.5){\line(0,1){1}}
\put(2,0.5){\line(1,1){1}}
\put(3,0.5){\line(0,1){1}}
\put(1,0.84){\makebox(0,0){$a$}}
\put(1.7,1.5){\makebox(0,0){$b$}}
\put(2,0){\makebox(0,0){$G_{1}$}}
\multiput(4,0.5)(1,0){4}{\circle*{0.29}}
\put(4,0.5){\line(1,0){3}}
\multiput(4,1.5)(1,0){4}{\circle*{0.29}}
\put(4,1.5){\line(1,-1){1}}
\put(5,0.5){\line(0,1){1}}
\put(7,0.5){\line(0,1){1}}
\put(7,0.5){\line(-1,1){1}}
\put(6,1.5){\line(1,0){1}}
\put(6,0.84){\makebox(0,0){$q$}}
\put(4,0.84){\makebox(0,0){$x$}}
\put(4.27,1.5){\makebox(0,0){$y$}}
\put(5.27,1.5){\makebox(0,0){$z$}}
\put(5.5,0){\makebox(0,0){$G_{2}$}}
\multiput(8,0.5)(1,0){6}{\circle*{0.29}}
\multiput(8,1.5)(1,0){6}{\circle*{0.29}}
\put(8,0.5){\line(1,1){1}}
\put(8,1.5){\line(1,0){1}}
\put(9,0.5){\line(0,1){1}}
\put(9,0.5){\line(1,0){4}}
\put(10,0.5){\line(0,1){1}}
\put(10,0.5){\line(1,1){1}}
\put(10,1.5){\line(1,-1){1}}
\put(10,1.5){\line(1,0){1}}
\put(11,0.5){\line(0,1){1}}
\put(12,1.5){\line(1,-1){1}}
\put(12,1.5){\line(1,0){1}}
\put(13,0.5){\line(0,1){1}}
\put(8.3,0.5){\makebox(0,0){$v$}}
\put(8,1.15){\makebox(0,0){$u$}}
\put(9.3,0.84){\makebox(0,0){$t$}}
\put(12,0.84){\makebox(0,0){$w$}}
\put(10.5,0){\makebox(0,0){$G_{3}$}}
\end{picture}\caption{$\mathrm{core}(G_{1})=\left\{  a,b\right\}  $,
$\mathrm{core}(G_{2})=\left\{  q,x,y,z\right\}  $, $\mathrm{core}%
(G_{3})=\left\{  t,u,v,w\right\}  $.}%
\label{fig333}%
\end{figure}

Let us notice that the graphs $G_{1}$, $G_{2}$ from Figure \ref{fig333} have:
$\mathrm{\ker}(G_{1})=\mathrm{core}(G_{1})$, $\mathrm{\ker}(G_{2})=\left\{
x,y,z\right\}  \subset\mathrm{core}(G_{2})$, and both $\mathrm{core}(G_{1})$
and $\mathrm{core}(G_{2})$\ are critical sets of maximum size. The graph
$G_{3}$ from Figure \ref{fig333} has $\mathrm{\ker}(G_{3})=\{u,v\}$, the set
$\{t,u,v\}$\ as a critical independent set of maximum size, while
$\mathrm{core}(G_{3})=\left\{  t,u,v,w\right\}  $\ is not a critical set. If
$S_{\min}$ denotes an inclusion minimal independent set with $d\left(
S_{\min}\right)  >0$, one can see that: $S_{\min}=\mathrm{\ker}(G_{1})$ for
$G_{1}$, while the graph $G_{2}$ in the same figure has $S_{\min}\in\left\{
\{x,y\},\{x,z\},\{y,z\}\right\}  $ and $\mathrm{\ker}(G_{2})=\{x,y\}\cup
\{x,z\}\cup\{y,z\}$.

In \cite{Levman2011a}\ we have shown that $\mathrm{\ker}(G)$ is equal to the
intersection of all critical, independent or not, sets of $G$.

\begin{theorem}
\label{th1}For every graph $G$
\[
\mathrm{\ker}(G)=\cup\left\{  S_{0}:S_{0}\text{ is an inclusion minimal
independent set with }d\left(  S_{0}\right)  >0\right\}  .
\]

\end{theorem}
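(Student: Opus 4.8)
The plan is to prove the two inclusions of the claimed identity separately, writing $U=\cup\{S_{0}:S_{0}\text{ is minimal independent with }d(S_{0})>0\}$ and disposing at the outset of the degenerate case $d_{c}(G)=0$, where $\emptyset$ is critical so $\mathrm{\ker}(G)=\emptyset$ and no set has positive difference, making both sides empty. A preliminary fact I would record first is that every inclusion minimal such $S_{0}$ has $d(S_{0})=1$: since $d(S_{0}-u)\geq d(S_{0})-1$ for any $u$, a value $d(S_{0})\geq 2$ would let me delete a vertex and keep a positive difference, contradicting minimality. Equivalently, every proper independent subset of a minimal $S_{0}$ has nonpositive difference, and this is the property I will exploit repeatedly.

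For the inclusion $U\subseteq\mathrm{\ker}(G)$ I would fix a minimal $S_{0}$ and an arbitrary critical independent set $A$, so that $d(A)=id_{c}(G)=d_{c}(G)$ by Theorem \ref{Theorem3}. Supermodularity (Theorem \ref{th7}(i)) then gives $d(A\cup S_{0})+d(A\cap S_{0})\geq d(A)+d(S_{0})\geq d_{c}(G)+1$, and because $d(A\cup S_{0})\leq d_{c}(G)$ I obtain $d(A\cap S_{0})\geq 1$. Now $A\cap S_{0}$ is an independent subset of $S_{0}$ with positive difference, so minimality forces $A\cap S_{0}=S_{0}$, i.e. $S_{0}\subseteq A$. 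Since $A$ was an arbitrary critical independent set, $S_{0}\subseteq\mathrm{\ker}(G)$, whence $U\subseteq\mathrm{\ker}(G)$; in particular $U$ is independent.

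For the reverse inclusion $\mathrm{\ker}(G)\subseteq U$ the plan is to place each $w\in\mathrm{\ker}(G)$ inside a minimal set. Applying the equivalence (i)$\Leftrightarrow$(iii) of Theorem \ref{th9} to $A=\mathrm{\ker}(G)$, I would choose a matching $M$ from $N(\mathrm{\ker}(G))$ into $\mathrm{\ker}(G)-w$, so that $w$ is exposed. Let $S_{0}$ be $w$ together with all vertices of $\mathrm{\ker}(G)$ reachable from $w$ by $M$-alternating paths, and let $P$ be the set of reachable vertices of $N(\mathrm{\ker}(G))$. A routine reachability check shows $N(S_{0})=P$ and that $M$ restricts to a bijection between $S_{0}-w$ and $P$; consequently $|N(S_{0})|=|S_{0}|-1$, so $d(S_{0})=1$ with $w\in S_{0}\subseteq\mathrm{\ker}(G)$.

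The hard part will be to guarantee that $w$ is not lost when I pass to a minimal subset of $S_{0}$, and this step is where I expect the subtlety to lie. I would take any inclusion minimal $S_{1}\subseteq S_{0}$ with $d(S_{1})>0$ and argue that necessarily $w\in S_{1}$: if $w\notin S_{1}$ then $S_{1}\subseteq S_{0}-w$, so the above bijection assigns to each vertex of $S_{1}$ a distinct neighbour lying in $N(S_{1})$, giving $|S_{1}|\leq|N(S_{1})|=|S_{1}|-1$, a contradiction. Hence $w\in S_{1}\subseteq U$, which establishes $\mathrm{\ker}(G)\subseteq U$. Combining the two inclusions yields the stated identity.
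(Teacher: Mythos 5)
Your proof is correct and takes essentially the same route as the paper's: the inclusion $U\subseteq\mathrm{\ker}(G)$ via supermodularity plus minimality of $S_{0}$ forcing $S_{0}\subseteq A$ for every critical set $A$ (you work with critical independent sets and the definition of $\mathrm{\ker}(G)$ directly, where the paper uses arbitrary critical sets and its earlier result that $\mathrm{\ker}(G)$ is their intersection), and the reverse inclusion via the matching from Theorem \ref{th9}\emph{(iii)} avoiding $w$, your alternating-path reachability set being exactly the paper's iterated $\left[MN\right]^{j}\left(v\right)$ at its fixed point. The one genuine difference is to your credit: the paper concludes by merely asserting that $v$ must lie in some inclusion minimal independent set $X$ with $d\left(X\right)=1$, whereas you explicitly verify this retention step (if $w\notin S_{1}\subseteq S_{0}-w$, the matching injects $S_{1}$ into $N\left(S_{1}\right)$, contradicting $d\left(S_{1}\right)=1$), thereby closing the only step the paper leaves unjustified.
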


\begin{proof}
Let $A$ be a critical set and $S_{0}$ be an inclusion minimal independent set
such that $d\left(  S_{0}\right)  >0$. Then, Theorem \ref{th7}\emph{(i)}
implies
\[
d(A\cup S_{0})+d(A\cap S_{0})\geq d(A)+d(S_{0})>d(A)=d_{c}\left(  G\right)  .
\]
Since $S_{0}$ is an inclusion minimal independent set such that $d(S_{0})>0$,
we obtain that if $A\cap S_{0}\neq S_{0}$, then $d(A\cap S_{0})\leq0$. Hence
\[
d(A)=d_{c}\left(  G\right)  \geq d(A\cup S_{0})\geq d(A)+d(S_{0})>d(A),
\]
which is impossible. Therefore, $S_{0}\subseteq A$ for every critical set $A$.
Consequently,%
\[
S_{0}\subseteq\cap\left\{  B:B\text{ \textit{is a critical set of} }G\right\}
=\mathrm{\ker}(G).
\]
Thus we obtain%
\[
\cup\left\{  S_{0}:S_{0}\text{ \textit{is an inclusion minimal independent set
such that} }d\left(  S_{0}\right)  >0\right\}  \subseteq\mathrm{\ker}(G).
\]

Conversely, it is enough to show that every vertex from $\mathrm{\ker}(G)$
belongs to some inclusion minimal independent set with positive difference.
Let $v\in\mathrm{\ker}(G)$. According to Theorem \ref{th9}\emph{(iii)} there
exists a matching, say $M$, from $N\left(  \mathrm{\ker}(G)\right)  $ into
$\mathrm{\ker}(G)-v$.

Let us build the following sequence of sets
\[
\left\{  v\right\}  \subseteq M\left(  N\left(  v\right)  \right)
\subseteq...\subseteq\left[  MN\right]  ^{k}\left(  v\right)  \subseteq...,
\]
where $MN$ is a superposition of two mappings $N:2^{V}\longrightarrow2^{V}$
($N\left(  A\right)  $ is the neighborhood of $A$) and $M:2^{N\left(
\mathrm{\ker}(G)\right)  }\longrightarrow2^{\mathrm{\ker}(G)}$ ($M\left(
A\right)  $ is set of the vertices matched by $M$ with vertices belonging to
$A$).

Since the set $\mathrm{\ker}(G)$ is finite, there is an index $j$ such that
$\left[  MN\right]  ^{j}\left(  v\right)  =\left[  MN\right]  ^{j+1}\left(
v\right)  $. Hence $\left\vert N\left(  \left[  MN\right]  ^{j}\left(
v\right)  \right)  \right\vert =\left\vert \left[  MN\right]  ^{j}\left(
v\right)  \right\vert -1$. In other words, we found an independent set,
namely, $\left[  MN\right]  ^{j}\left(  v\right)  $ such that $v\in\left[
MN\right]  ^{j}\left(  v\right)  $ and $d\left(  \left[  MN\right]
^{j}\left(  v\right)  \right)  =1$. Therefore, there must exist an inclusion
minimal independent set $X$ such that $v\in X$ and $d\left(  X\right)  =1$.
\end{proof}

\begin{remark}
In a graph $G$, the union of all minimum cardinality independent sets $S$ with
$d\left(  S\right)  >0$ may be a proper subset of $\mathrm{\ker}\left(
G\right)  $; e.g., the graph $G$ in Figure \ref{fig177}, that has $\left\{
x,y\right\}  \subset\mathrm{\ker}\left(  G\right)  =\left\{
x,y,u,v,w\right\}  $.
\end{remark}

\begin{figure}[h]
\setlength{\unitlength}{1cm}\begin{picture}(5,1.3)\thicklines
\multiput(4,0)(1,0){3}{\circle*{0.29}}
\multiput(3,1)(1,0){5}{\circle*{0.29}}
\put(4,0){\line(1,0){2}}
\put(4,0){\line(0,1){1}}
\put(3,1){\line(1,-1){1}}
\put(5,0){\line(0,1){1}}
\put(5,0){\line(2,1){2}}
\put(6,0){\line(0,1){1}}
\put(6,0){\line(1,1){1}}
\put(2.7,1){\makebox(0,0){$x$}}
\put(3.7,1){\makebox(0,0){$y$}}
\put(4.7,1){\makebox(0,0){$u$}}
\put(5.7,1){\makebox(0,0){$v$}}
\put(7.3,1){\makebox(0,0){$w$}}
\put(2,0.5){\makebox(0,0){$G$}}
\multiput(9,0)(1,0){3}{\circle*{0.29}}
\multiput(9,1)(2,0){2}{\circle*{0.29}}
\put(9,0){\line(1,0){2}}
\put(9,1){\line(1,-1){1}}
\put(10,0){\line(1,1){1}}
\put(8.65,0){\makebox(0,0){$v_{1}$}}
\put(8.65,1){\makebox(0,0){$v_{2}$}}
\put(11.35,1){\makebox(0,0){$v_{3}$}}
\put(11.35,0){\makebox(0,0){$v_{4}$}}
\put(8,0.5){\makebox(0,0){$H$}}
\end{picture}\caption{Both $S_{1}=\{x,y\}$ and $S_{2}=\{u,v,w\}$ are inclusion
minimal independent sets satisfying $d\left(  S\right)  >0$.}%
\label{fig177}%
\end{figure}
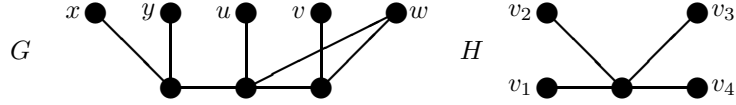

\begin{proposition}
$\min\left\{  \left\vert S_{0}\right\vert :d\left(  S_{0}\right)  >0,S_{0}%
\in\mathrm{Ind}(G)\right\}  \leq\left\vert \mathrm{\ker}\left(  G\right)
\right\vert -d_{c}\left(  G\right)  +1$.
\end{proposition}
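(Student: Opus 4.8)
The plan is to exhibit a single independent set $S_{0}$ of positive difference whose cardinality is at most $\left\vert \mathrm{\ker}(G)\right\vert -d_{c}(G)+1$; since the left-hand side is a minimum over all such sets, one witness suffices. Write $K=\mathrm{\ker}(G)$ and assume $K\neq\emptyset$ (equivalently $d_{c}(G)>0$), since otherwise the family on the left is empty. The first step is to pin down the size of $N(K)$. By Theorem \ref{th7}\emph{(ii)} the set $K$ is itself a critical independent set, so Theorem \ref{Theorem3} gives $d(K)=id_{c}(G)=d_{c}(G)$, that is,
\[
\left\vert N(K)\right\vert =\left\vert K\right\vert -d_{c}(G).
\]
This is the quantity that will become the right-hand side of the claim.

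The second step is to recycle the construction already used in the proof of Theorem \ref{th1}. Fixing any $v\in K$, Theorem \ref{th9}\emph{(iii)} provides a matching $M$ from $N(K)$ into $K-v$, and the orbit of $\{v\}$ under the map $M\circ N$ stabilises at an independent set $S_{0}=\left[ MN\right] ^{j}(v)$ with $v\in S_{0}$ and $d(S_{0})=1$, exactly as in that proof. The feature I would emphasise here is that $S_{0}\subseteq K$: the seed $v$ lies in $K$ and every application of $M$ lands back in $K$, so the whole orbit stays inside $K$. Consequently $N(S_{0})\subseteq N(K)$, and therefore
\[
\left\vert S_{0}\right\vert =\left\vert N(S_{0})\right\vert +d(S_{0})=\left\vert N(S_{0})\right\vert +1\leq\left\vert N(K)\right\vert +1=\left\vert \mathrm{\ker}(G)\right\vert -d_{c}(G)+1.
\]
Since $d(S_{0})=1>0$ and $S_{0}\in\mathrm{Ind}(G)$, the minimum on the left of the asserted inequality cannot exceed $\left\vert S_{0}\right\vert $, which is the desired bound.

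The only delicate point is the containment $S_{0}\subseteq K$, because it is precisely this that forces $N(S_{0})\subseteq N(K)$ and hence bounds $\left\vert N(S_{0})\right\vert $ by $\left\vert K\right\vert -d_{c}(G)$. This containment is guaranteed by the structure of the matching supplied by Theorem \ref{th9}\emph{(iii)}, which sends $N(K)$ back into $K$; the fixed-point set it produces then automatically satisfies $d(S_{0})=1$ (each vertex of $N(S_{0})$ is matched into $S_{0}$, while $v$ itself is left unmatched). Everything beyond these two observations is bookkeeping, so I anticipate no further obstacle.
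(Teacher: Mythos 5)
Your proof is correct, but it takes a genuinely different route from the paper's. The paper works ``top down'': it takes a matching $M$ from $N\left(\mathrm{\ker}(G)\right)$ into $\mathrm{\ker}(G)$ (Theorem \ref{th7}\emph{(iii)} suffices), sets $X=M\left(N\left(\mathrm{\ker}(G)\right)\right)$, observes $N(X)=N\left(\mathrm{\ker}(G)\right)$ so that $d(X)=0$, and then adjoins any unmatched vertex $v\in\mathrm{\ker}(G)-X$; the witness $X\cup\{v\}$ has difference $1$ and cardinality \emph{exactly} $\left\vert\mathrm{\ker}(G)\right\vert-d_{c}(G)+1$. You instead work ``bottom up'': you invoke the stronger Theorem \ref{th9}\emph{(iii)} to get a matching from $N\left(\mathrm{\ker}(G)\right)$ into $\mathrm{\ker}(G)-v$, rerun the fixed-point iteration from the proof of Theorem \ref{th1} to grow a set $S_{0}\ni v$ with $d(S_{0})=1$, and then bound its size via the (correct and genuinely needed) observation $S_{0}\subseteq\mathrm{\ker}(G)$, hence $\left\vert S_{0}\right\vert=\left\vert N(S_{0})\right\vert+1\leq\left\vert N\left(\mathrm{\ker}(G)\right)\right\vert+1$. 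The paper's argument is lighter (it avoids Theorem \ref{th9} and the stabilization argument entirely) and pins down a witness of exactly the stated size; yours buys something extra: since the seed $v\in\mathrm{\ker}(G)$ is arbitrary, it shows that \emph{every} vertex of $\mathrm{\ker}(G)$ lies in an independent set of difference $1$ and size at most $\left\vert\mathrm{\ker}(G)\right\vert-d_{c}(G)+1$, and your explicit handling of the degenerate case $\mathrm{\ker}(G)=\emptyset$ (where the left-hand minimum ranges over an empty family) is a small point the paper leaves implicit. One caveat you inherit from the paper: the chain $\{v\}\subseteq M(N(v))\subseteq\dots$ is only monotone if each iterate is understood to retain $v$ (i.e., one iterates $A\mapsto\{v\}\cup M(N(A))$, since $v$ itself is unmatched); your parenthetical remark that $v$ is left unmatched shows you are aware of this, but a careful write-up should fix the notation.
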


\begin{proof}
Since $\mathrm{\ker}(G)$ is a critical independent set, Theorem \ref{th7}%
\emph{(iii)} implies that there is a matching, say $M$, from $N\left(
\mathrm{\ker}(G)\right)  $ into $\mathrm{\ker}(G)$. Let $X=M\left(  N\left(
\mathrm{\ker}(G)\right)  \right)  $. Then $d\left(  X\right)  =0$. For every
$v\in\mathrm{\ker}(G)-X$ we have
\[
N\left(  \mathrm{\ker}(G)\right)  \subseteq N\left(  X\right)  \subseteq
N\left(  X\cup\left\{  v\right\}  \right)  \subseteq N\left(  \mathrm{\ker
}(G)\right)  .
\]
Hence we get $\left\vert X\cup\left\{  v\right\}  \right\vert -\left\vert
N\left(  X\cup\left\{  v\right\}  \right)  \right\vert =1$, while $\left\vert
X\cup\left\{  v\right\}  \right\vert =\left\vert \mathrm{\ker}\left(
G\right)  \right\vert -d_{c}\left(  G\right)  +1$.
\end{proof}

\begin{remark}
All the inclusion minimal independent sets $S$, with $d\left(  S\right)  >0$,
of the graph $H$ from Figure \ref{fig177} are of the same size. However, there
are inclusion minimal independent sets $S$ with $d\left(  S\right)  >0$, of
different cardinalities; e.g., the graph $G$ from Figure \ref{fig177}.
\end{remark}

\begin{proposition}
\label{prop3}If $S_{0}$ is an inclusion minimal independent set with $d\left(
S_{0}\right)  >0$, then $d\left(  S_{0}\right)  =1$.
\end{proposition}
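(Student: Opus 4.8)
The plan is to exploit the inclusion-minimality of $S_0$ together with the integrality of the difference function. Since $d(S_0)=|S_0|-|N(S_0)|>0$ and $d$ takes only integer values, we already have $d(S_0)\ge 1$, so the entire task reduces to proving the upper bound $d(S_0)\le 1$.

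First I would observe that $S_0$ is nonempty, because $d(S_0)>0$ forces $|S_0|>|N(S_0)|\ge 0$. Thus I may fix an arbitrary vertex $v\in S_0$. As $S_0$ is independent, every one of its subsets is independent as well; in particular $S_0-\{v\}$ is an independent set properly contained in $S_0$. By the inclusion-minimality of $S_0$ among independent sets of positive difference, no proper independent subset can have positive difference, and therefore $d\left(S_0-\{v\}\right)\le 0$.

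The heart of the argument is to relate $d\left(S_0-\{v\}\right)$ back to $d(S_0)$. Deleting a vertex can only shrink a neighborhood, so $N\left(S_0-\{v\}\right)\subseteq N(S_0)$ and hence $\left|N\left(S_0-\{v\}\right)\right|\le\left|N(S_0)\right|$. Combining this with $\left|S_0-\{v\}\right|=|S_0|-1$ gives
\[
d\left(S_0-\{v\}\right)=(|S_0|-1)-\left|N\left(S_0-\{v\}\right)\right|\ge(|S_0|-1)-|N(S_0)|=d(S_0)-1.
\]
Chaining this with the minimality bound $d\left(S_0-\{v\}\right)\le 0$ yields $d(S_0)-1\le 0$, that is, $d(S_0)\le 1$. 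Together with $d(S_0)\ge 1$ this forces $d(S_0)=1$.

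I do not anticipate a genuine obstacle, since the proof uses only the monotonicity of the neighborhood map and the fact that $d$ is integer-valued; in particular it needs neither the supermodularity of $d$ (Theorem \ref{th7}\emph{(i)}) nor any matching argument. The only point deserving a moment's care is the degenerate case $S_0=\{v\}$, where $S_0-\{v\}=\emptyset$ and $d(\emptyset)=0$ still satisfies the required inequality, so the reasoning goes through verbatim.
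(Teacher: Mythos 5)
Your proof is correct and follows essentially the same route as the paper: fix $v\in S_{0}$, use inclusion-minimality to get $d\left(S_{0}-\{v\}\right)\leq 0$, relate this back to $d\left(S_{0}\right)$, and finish by integrality together with $d\left(S_{0}\right)>0$. The only (harmless) difference is that you invoke just the inclusion $N\left(S_{0}-\{v\}\right)\subseteq N\left(S_{0}\right)$ to obtain $d\left(S_{0}-\{v\}\right)\geq d\left(S_{0}\right)-1$, whereas the paper first establishes the sharper equality $N\left(S_{0}-\{v\}\right)=N\left(S_{0}\right)$ via a separate contradiction argument and then computes $d\left(S_{0}-\{v\}\right)=d\left(S_{0}\right)-1$ exactly; your shortcut is valid and slightly more economical.
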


\begin{proof}
For each $v\in S_{0}$, it follows that $N\left(  S_{0}-v\right)  =N\left(
S_{0}\right)  $, otherwise,%
\begin{gather*}
d\left(  S_{0}-v\right)  =\left\vert S_{0}-v\right\vert -\left\vert N\left(
S_{0}-v\right)  \right\vert =\\
=\left\vert S_{0}\right\vert -1-\left\vert N\left(  S_{0}-v\right)
\right\vert \geq\left\vert S_{0}\right\vert -\left\vert N\left(  S_{0}\right)
\right\vert >0\text{,}%
\end{gather*}
i.e., $S_{0}$ is not an inclusion minimal independent set with positive difference.

Since $S_{0}$ is an inclusion minimal independent set with positive
difference, we know that $d\left(  S_{0}-v\right)  \leq0$. On the other hand,
it follows from the equality $N\left(  S_{0}-v\right)  =N\left(  S_{0}\right)
$ that%
\[
d\left(  S_{0}-v\right)  =\left\vert S_{0}-v\right\vert -\left\vert N\left(
S_{0}-v\right)  \right\vert =\left\vert S_{0}\right\vert -1-\left\vert
N\left(  S_{0}\right)  \right\vert =d\left(  S_{0}\right)  -1\leq0\text{.}%
\]

Consequently, $0<\left\vert S_{0}\right\vert -\left\vert N\left(
S_{0}\right)  \right\vert \leq1$, which means that $\left\vert S_{0}%
\right\vert -\left\vert N\left(  S_{0}\right)  \right\vert =1$.
\end{proof}

\begin{remark}
The converse of Proposition \ref{prop3} is not true. For instance, $S=\left\{
x,y,u\right\}  $ is independent in the graph $G$ from Figure \ref{fig177}\ and
$d\left(  S\right)  =1$, but $S$ is not minimal with this property.
\end{remark}

\begin{proposition}
\label{prop2}If $S_{i},i=1,2,...,k,k\geq1$, are inclusion minimal independent
sets, such that $d\left(  S_{i}\right)  >0,S_{i}\nsubseteq\bigcup
\limits_{j=1,j\neq i}^{k}S_{j}$,$1\leq i\leq k$, then $d\left(  S_{1}\cup
S_{2}\cup...\cup S_{k}\right)  \geq k$.
\end{proposition}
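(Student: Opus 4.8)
The plan is to argue by induction on $k$, with supermodularity of the difference function (Theorem \ref{th7}(i)) providing the inductive engine and the inclusion-minimality of the $S_i$ controlling the error term. For the base case $k=1$, Proposition \ref{prop3} immediately yields $d(S_1)=1\geq 1$, so there is nothing to prove.

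For the inductive step I would set $T=S_1\cup S_2\cup\cdots\cup S_{k-1}$ and apply Theorem \ref{th7}(i) to the pair $T,S_k$, obtaining
\[
d(S_1\cup\cdots\cup S_k)=d(T\cup S_k)\geq d(T)+d(S_k)-d(T\cap S_k).
\]
Here Proposition \ref{prop3} gives $d(S_k)=1$. The hypothesis $S_k\nsubseteq\bigcup_{j\neq k}S_j=T$ guarantees that $T\cap S_k$ is a proper subset of $S_k$; being contained in the independent set $S_k$ it is itself independent, and since $S_k$ is an inclusion minimal independent set with positive difference, every proper subset has nonpositive difference, so $d(T\cap S_k)\leq 0$. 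Combining these with the inductive bound $d(T)\geq k-1$ would give $d(S_1\cup\cdots\cup S_k)\geq (k-1)+1-0=k$.

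The step that requires care is checking that the subfamily $S_1,\ldots,S_{k-1}$ actually satisfies the hypotheses of the proposition, so that the induction hypothesis applies and yields $d(T)\geq k-1$. Each $S_i$ remains an inclusion minimal independent set with $d(S_i)>0$; the only thing to verify is the non-containment condition relative to the smaller family. But the assumed $S_i\nsubseteq\bigcup_{j=1,\,j\neq i}^{k}S_j$ supplies, for each $i\leq k-1$, a vertex of $S_i$ lying in no other $S_j$ whatsoever, in particular in none of the $S_j$ with $j\leq k-1$, $j\neq i$; hence $S_i\nsubseteq\bigcup_{j=1,\,j\neq i}^{k-1}S_j$ and the hypotheses are inherited. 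Once this bookkeeping is in place, the three displayed ingredients assemble into the claim, and the main subtlety is simply ensuring the error term $d(T\cap S_k)$ is nonpositive, which is exactly where minimality together with the non-containment hypothesis are used.
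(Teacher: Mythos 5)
Your proposal is correct and follows essentially the same route as the paper: induction on $k$, with supermodularity (Theorem \ref{th7}\emph{(i)}) combined with $d(S_k)=1$ from Proposition \ref{prop3} and the observation that the non-containment hypothesis forces $T\cap S_k$ to be a proper (hence nonpositive-difference) subset of the inclusion minimal set $S_k$. Your explicit check that the subfamily $S_1,\ldots,S_{k-1}$ inherits the non-containment hypothesis is a point the paper passes over silently, but it is the same argument.
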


\begin{proof}
For $k=1$ the claim has been treated in Proposition \ref{prop3}, where we have
achieved a stronger result.

We continue by induction on $k$.

Let $k=2$. Since $S_{1}\neq S_{1}\cap S_{2}\subset S_{1}$, it follows that
$d\left(  S_{1}\cap S_{2}\right)  \leq0$. Hence, Theorem \ref{th7}\emph{(i)}
and Proposition \ref{prop3} imply
\[
d\left(  S_{1}\cup S_{2}\right)  \geq d\left(  S_{1}\cup S_{2}\right)
+d\left(  S_{1}\cap S_{2}\right)  \geq d\left(  S_{1}\right)  +d\left(
S_{2}\right)  =2.
\]

Assume that the assertion is true for each $k\geq2$, and let $\left\{
S_{i},1\leq i\leq k+1\right\}  $ be a family of inclusion minimal independent
sets with
\[
d\left(  S_{i}\right)  >0\text{ and }S_{i}\nsubseteq\bigcup\limits_{j=1,j\neq
i}^{k+1}S_{j},1\leq i\leq k+1.
\]

Since $S_{k+1}\neq\left(  S_{1}\cup S_{2}\cup...\cup S_{k}\right)  \cap
S_{k+1}\subset$ $S_{k+1}$, we obtain that
\[
d\left(  \left(  S_{1}\cup S_{2}\cup...\cup S_{k}\right)  \cap S_{k+1}\right)
\leq0.
\]
Further, using the supermodularity of the function $d$ and Proposition
\ref{prop3}, we get%
\begin{gather*}
d\left(  S_{1}\cup S_{2}\cup...\cup S_{k}\cup S_{k+1}\right)  \geq\\
\geq d\left(  S_{1}\cup S_{2}\cup...\cup S_{k}\cup S_{k+1}\right)  +d\left(
\left(  S_{1}\cup S_{2}\cup...\cup S_{k}\right)  \cap S_{k+1}\right)  \geq\\
\geq d\left(  S_{1}\cup S_{2}\cup...\cup S_{k}\right)  +d\left(
S_{k+1}\right)  \geq k+1,
\end{gather*}
as required.
\end{proof}

\begin{remark}
The sets $S_{1}=\left\{  v_{1},v_{2}\right\}  ,S_{2}=\left\{  v_{2}%
,v_{3}\right\}  ,S_{3}=\left\{  v_{3},v_{4}\right\}  $ are inclusion minimal
independent sets of the graph $H$ from Figure \ref{fig177}, such that
\[
d\left(  S_{i}\right)  >0,S_{i}\nsubseteq\bigcup\limits_{j=1,j\neq i}^{3}%
S_{j},i=1,2,3.
\]
Notice that both families $\left\{  S_{1},S_{2}\right\}  $, $\left\{
S_{1},S_{3}\right\}  $ have two elements, and $d\left(  S_{1}\cup
S_{2}\right)  =2$, while $d\left(  S_{1}\cup S_{3}\right)  >2$.
\end{remark}

\section{Conclusions}

In this paper we investigate structural properties of $\mathrm{\ker}(G)$.

Having in view Theorem \ref{th1}, notice that the graph:

\begin{itemize}
\item $G_{1}$ from Figure \ref{fig333} has only one inclusion minimal
independent set $S$ such that $d\left(  S\right)  >0$, and $d_{c}\left(
G_{1}\right)  =1$;

\item $G$ from Figure \ref{fig177} has only two inclusion minimal independent
sets $S$ such that $d\left(  S\right)  >0$, and $d_{c}\left(  G\right)  =2$;

\item $H$ from Figure \ref{fig177} has $6$ inclusion minimal independent sets
$S$ such that $d\left(  S\right)  >0$, and $d_{c}\left(  H\right)  =3$.
\end{itemize}

These remarks motivate the following.

\begin{conjecture}
The number of inclusion minimal independent set $S$ such that $d\left(
S\right)  >0$ is greater or equal to $d_{c}\left(  G\right)  $.
\end{conjecture}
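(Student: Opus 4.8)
The plan is to exhibit an explicit injection from a $d_{c}(G)$-element set into the family $\mathcal{F}$ of all inclusion minimal independent sets $S$ with $d(S)>0$. First I would fix, once and for all, a single matching $M$ from $N(\mathrm{\ker}(G))$ into $\mathrm{\ker}(G)$; such a matching exists by Theorem \ref{th7}\emph{(iii)}, since $\mathrm{\ker}(G)$ is a critical independent set. Because $M$ saturates $N(\mathrm{\ker}(G))$, the set $F$ of vertices of $\mathrm{\ker}(G)$ left unsaturated by $M$ has cardinality $\left\vert \mathrm{\ker}(G)\right\vert -\left\vert N(\mathrm{\ker}(G))\right\vert =d(\mathrm{\ker}(G))=d_{c}(G)$, the last equality using Theorems \ref{Theorem3} and \ref{th7}\emph{(ii)}. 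The goal then becomes to attach to each $f\in F$ a set $X_{f}\in\mathcal{F}$ so that $f\mapsto X_{f}$ is injective.

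Second, for a fixed $f\in F$ I would run the closure construction from the proof of Theorem \ref{th1}, but driven by this same $M$. The decisive observation is that, since $M$ already misses $f$, it is itself a matching from $N(\mathrm{\ker}(G))$ into $\mathrm{\ker}(G)-f$, so no vertex-dependent choice of matching is needed. Iterating $T_{0}=\{f\}$ and $T_{k+1}=\{f\}\cup M(N(T_{k}))$ produces an increasing chain stabilising at a set $C_{f}$ with $d(C_{f})=1$ and $C_{f}\setminus\{f\}\subseteq M(N(C_{f}))$, i.e. every vertex of $C_{f}$ other than $f$ is $M$-saturated. I would then record the key lemma: any set $S$ consisting of $M$-saturated vertices satisfies $d(S)\leq0$, because sending each $s\in S$ to its $M$-partner injects $S$ into $N(S)$. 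Consequently every inclusion minimal independent set with positive difference that is contained in $C_{f}$ must contain $f$; choosing $X_{f}$ to be any such minimal subset of $C_{f}$ gives $X_{f}\in\mathcal{F}$ with $X_{f}\cap F=\{f\}$. That $X_{f}$ is minimal among all independent sets, not merely among subsets of $C_{f}$, follows because every proper subset of $X_{f}$ lies in $C_{f}\subseteq\mathrm{\ker}(G)$ and $\mathrm{\ker}(G)$ is independent, so such a subset has nonpositive difference.

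Finally, injectivity is immediate from the invariant $X_{f}\cap F=\{f\}$: distinct $f,f'\in F$ yield $X_{f}\cap F\neq X_{f'}\cap F$, hence $X_{f}\neq X_{f'}$. Therefore $\left\vert \mathcal{F}\right\vert \geq\left\vert F\right\vert =d_{c}(G)$, which is the assertion. I expect the main obstacle to be exactly the point that makes the argument work, namely committing to one matching $M$ for all of its free vertices simultaneously so that being $M$-saturated is a single, global invariant; with a separately chosen matching for each $f$ (as literally furnished by Theorem \ref{th9}\emph{(iii)}) the uniqueness of the unsaturated vertex in $X_{f}$ would be measured against different matchings and the injection would collapse. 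The supporting lemma that $M$-saturated sets have nonpositive difference is what pins each $X_{f}$ to a single free vertex and thereby underlies the whole scheme.
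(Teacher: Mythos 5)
The statement you are proving is not proved anywhere in the paper: it appears only as a \emph{conjecture}, motivated by the examples $G_{1}$, $G$ and $H$, so there is no authors' proof to compare against. Your argument, as far as I can verify, is correct and actually settles the conjecture. The accounting is right: a matching $M$ from $N\left(  \mathrm{\ker}(G)\right)  $ into $\mathrm{\ker}(G)$ exists by Theorem \ref{th7}\emph{(iii)}, it saturates all of $N\left(  \mathrm{\ker}(G)\right)  $, so the free set $F\subseteq\mathrm{\ker}(G)$ has exactly $\left\vert \mathrm{\ker}(G)\right\vert -\left\vert N\left(  \mathrm{\ker}(G)\right)  \right\vert =d\left(  \mathrm{\ker}(G)\right)  =d_{c}\left(  G\right)  $ elements by Theorems \ref{Theorem3} and \ref{th7}\emph{(ii)} (and the case $d_{c}\left(  G\right)  =0$ is vacuous). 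For $f\in F$ the stabilised closure satisfies $C_{f}=\left\{  f\right\}  \cup M\left(  N\left(  C_{f}\right)  \right)  $; since $f$ is outside the image of $M$ and $M$ is injective on $N\left(  C_{f}\right)  $, indeed $d\left(  C_{f}\right)  =1$, and your key lemma (an $M$-saturated subset $S$ of $\mathrm{\ker}(G)$ injects into $N\left(  S\right)  $ via its $M$-partners, so $d\left(  S\right)  \leq0$) forces every positive-difference subset of $C_{f}$ to contain $f$; hence any minimal such subset $X_{f}$ satisfies $X_{f}\cap F=\left\{  f\right\}  $, and $f\mapsto X_{f}$ is injective. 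One small correction of attribution: global inclusion minimality of $X_{f}$ follows because every proper subset of $X_{f}$ is again a subset of $C_{f}$, over which $X_{f}$ was chosen minimal --- the independence of $\mathrm{\ker}(G)$, which you cite, only guarantees that such subsets are independent, not that their difference is nonpositive. Relative to the paper's machinery, your route reuses the $\left[  MN\right]  ^{k}$ iteration from the proof of Theorem \ref{th1}, but with a decisive twist you correctly identify: Theorem \ref{th1} invokes Theorem \ref{th9}\emph{(iii)} to pick, for each vertex $v$, a fresh matching missing $v$, which suffices for the existence statement $\mathrm{\ker}(G)=\cup S_{0}$ but gives no way to tell the resulting minimal sets apart; committing to a single $M$ makes ``$M$-saturated'' a global invariant, pins each constructed set to exactly one free vertex, and thereby upgrades the paper's structural theorem to the counting bound conjectured in the Conclusions.
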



\begin{thebibliography}{9}                                                                                                %


\bibitem {Ageev}A. A. Ageev, \emph{On finding critical independent and vertex
sets}, SIAM Journal of discrete mathematics \textbf{7} (1994) 293-295.

\bibitem {BorosGolLev}E. Boros, M. C. Golumbic, V. E. Levit, \emph{On the
number of vertices belonging to all maximum stable sets of a graph}, Discrete
Applied Mathematics \textbf{124} (2002) 17-25.

\bibitem {Larson}C. E. Larson, \emph{A note on critical independence
reductions}, Bulletin of the Institute of Combinatorics and its Applications
\textbf{5} (2007) 34-46.

\bibitem {LevMan2002a}V. E. Levit, E. Mandrescu, \emph{Combinatorial
properties of the family of maximum stable sets of a graph}, Discrete Applied
Mathematics \textbf{117} (2002) 149-161.

\bibitem {Levman2011a}V. E. Levit, E. Mandrescu, \emph{Vertices belonging to
all critical independent sets of a graph}, arXiv:1102.0401v1 [cs.DM] 9 pp.

\bibitem {Levman2011b}V. E. Levit, E. Mandrescu, \emph{Critical sets in
bipartite graphs}, arXiv:1102.1138v1 [cs.DM] 13 pp.

\bibitem {Zhang}C. Q. Zhang, \emph{Finding critical independent sets and
critical vertex subsets are polynomial problems}, SIAM Journal of Discrete
Mathematics \textbf{3} (1990) 431-438.
\end{thebibliography}
\end{document}